\newtheorem{theorem}{Theorem}
\newtheorem{lemma}{Lemma}
\crefname{lemma}{Lemma}{Lemmas}
\newcommand{\hcoupling}{h^{cou}}
\newcommand{\hcohesion}{h^{coh}}
\newcommand{\hadd}{h^{add}}
\newcommand{\intraedge}{E_{\boxtimes}}
\newcommand{\intraedgemax}{\intraedge^{\max}}
\newcommand{\interedge}{E_{\bowtie}}
\begin{document}

\begin{center}
\vspace{-0.2in}
  \title{Opti Code Pro: A Heuristic Search-based Approach to Code Refactoring}
  \maketitle

  \thispagestyle{empty}

  \begin{tabular}{cc}
    Sourena Khanzadeh\upstairs{\affilone,*}, Samad Alias Nyein Chan\upstairs{\affilone}, Richard Valenzano\upstairs{\affilone}, Manar Alalfi\upstairs{\affilone}
   \\[0.25ex]
   {\small \upstairs{\affilone} Toronto Metropolitan University} \\
  \end{tabular}
  
  \emails{
    \upstairs{*}samad.chan@torontomu.ca 
    \upstairs{*}sourena.khanzadeh@torontomu.ca 
    \upstairs{*}rick.valenzano@torontomu.ca 
    \upstairs{*}manar.alalfi@torontomu.ca 
    }
  \vspace*{-0.2in}
\end{center}
\begin{abstract}
This paper presents an approach that evaluates best-first search methods to code refactoring. The motivation for code refactoring could be to improve the design, structure, or implementation of an existing program without changing its functionality. To solve a very specific problem of coupling and cohesion, we propose using heuristic search-based techniques on an approximation of the full code refactoring problem, to guide the refactoring process toward solutions that have high cohesion and low coupling. We evaluated our approach by providing demonstrative examples of the effectiveness of this approach on random state problems and created a tool to implement the algorithm on Java projects.
\end{abstract}
\begin{keywords}{Keywords:}
Code Refactoring, Heuristic Search, A*, Software Engineering
\end{keywords}
\copyrightnotice

\section{Introduction}
\label{intro}
Refactoring code is a strategy utilized in software engineering and programming that involves changing the internal organization of a pre-existing code without altering its functionality. During refactoring, the software's capabilities remain intact while the design, arrangement, and/or execution are improved \cite{fowler2018refactoring}. The significance of code refactoring is to help sustain and enhance the maintenance and overall quality of software systems over time. Over time, as new functionalities are introduced, the codebase can become intricate and challenging to comprehend and preserve. Refactoring code tackles these challenges by enhancing the design, arrangement, and execution of existing code, resulting in a more legible, maintainable, and scalable software system. 

This paper aims to introduce \textit{Opti Code Pro}, a new tool that employs best-first search algorithms to automate and streamline the laborious process of deciding code re-factoring changes for an object-oriented system. Specifically, this system focuses on minimizing a program's \emph{coupling} and maximizing \emph{cohesion} by changing the dependencies between classes and modules. In object-oriented software development, cohesion and coupling deem as important principles that play a crucial role in the development of a well-organized, sustainable codebase. Cohesion refers to the degree to which the elements within a module are related to one another, while coupling refers to the degree to which one module depends on another. Low coupling and high cohesion are generally considered to be desirable properties of well-designed programs \cite{CouplingCohesionProblem}.

Deploying the principles of coupling and cohesion to suggest code refactoring changes are the major focus of this tool. Our objective is to introduce an approximation of the Code Refactoring Domain, demonstrate a unique heuristic search-based approach to enhance coupling and cohesion, and implement it using the Opti Code Pro tool in actual software development projects - the source code for the tool, heuristic techniques and experiments in this paper will be included upon publication.

By employing these search-based techniques, we aim to enhance the design, structure, and implementation of existing programs while maintaining their functionality. 
We experiment on randomly generated program architectures to better understand the performance of our system and demonstrate that it can be used to find effective refactoring on 2 open-source Java projects.

\section{Search Background}
A \emph{search problem} can be defined as the tuple $\mathscr{S} = <S, s_0, \mathcal{G}, A, \beta, k>$, where $S$ is a finite set of \emph{states}, $s_0 \in S$ is the \emph{initial state}, $\mathcal{G}: S \rightarrow \lbrace true, false \rbrace$ is a \emph{goal test function} that returns true if and only if the given state is a goal, $A$ is a set of \emph{actions}, $\beta(a, s)$ is the \emph{state transition function}, and $k(a, s) \geq 0$ stands for a \emph{cost} of doing an action $a$ on state $s$. We use $succ(s)$ to denote the successors of state $s$, meaning the set of states that can be reached from $s$ using a single action.

The objective of a search problem is to find a sequence of actions $a_0, ..., a_n$, called a \emph{solution plan}, that can be applied starting in $s_0$, such that the result is a state $s_g$ such that $G(s)=true$. The cost of any plan is given by the sum of the costs of the actions along that plan. A solution plan is \emph{optimal} if its cost is minimal with respect to all solutions, and $w$-\emph{admissible} for some weight $w \geq 1$, if the cost of that solution is no more than $w$ times larger than the optimal solution.

Heuristic search algorithms use a \emph{heuristic function} to guide a search for a solution plan. A heuristic function $h: S \rightarrow \mathbb{R}^{\geq 0}$ helps prioritize which states to explore during the search. This is typically done by estimating the cost to reach the nearest goal state from the current state. A heuristic $h$ is  \emph{admissible} if it never overestimates the optimal cost to reach a goal state from any state in $S$. A heuristic $h$ is \emph{consistent} if $h(c) \geq h(p) - k(a, p)$ for every pair of states $p$ and $c$ where $c \in succ(p)$ and $\beta(a, p)=c$. It is well-known that if $h$ is consistent and $h(s)=0$ for any goal state $s$, then $h$ is necessarily admissible.

The most well-known heuristic search algorithm is the \emph{A*} algorithm. A* is a best-first search algorithm that iteratively expands and stores a set of partial plans starting from the initial state until one is found that reaches the goal. For any state that has been found during the search, this algorithm maintains the cost of the plan found to that state, denoted as $g(s)$. On every iteration, where $h$ is a heuristic function, A* selects the state with the lowest value of $f(s) = g(s) + h(s)$ for expansion, from amongst all states that have previously been found but not expanded. The successors of this state are then used to extend the path to $s$, and these are added to the set of states that have been reached but not expanded. A* is guaranteed to find optimal paths if the given heuristic is admissible.

Another common best-first search algorithm is \emph{Weighted A* (WA*)} \cite{POHL1970193}.
This method only varies in the way they prioritize states for expansion. Specifically, WA* uses the evaluation function $f_w(s) = g(s) = wh(n)$ where $w \geq 1$ is a user-defined parameter. 
WA* is guaranteed to only return solutions that are $w$-admissible if $h$ is admissible and typically finds solutions faster than A*. 
\section{Class-Level Refactoring} \label{sec:refactoring_definition}
Our focus is on class-level and module-level dependency refactoring of an object-oriented software project.
For this purpose, we approximate a project as having three main components: \textbf{modules}, \textbf{classes}, and \textbf{dependencies}. Modules represent units of functionality that independently contain one or more classes. Classes are used to create objects and encapsulate data and behavior regarding the objects. Dependencies represent the relationship between classes, where a class may be reliant on another to fulfill its functionality. For example, a method invocation from a class used within another.
\begin{figure*}[t!]
\centering
\includegraphics[width=0.3\textwidth]{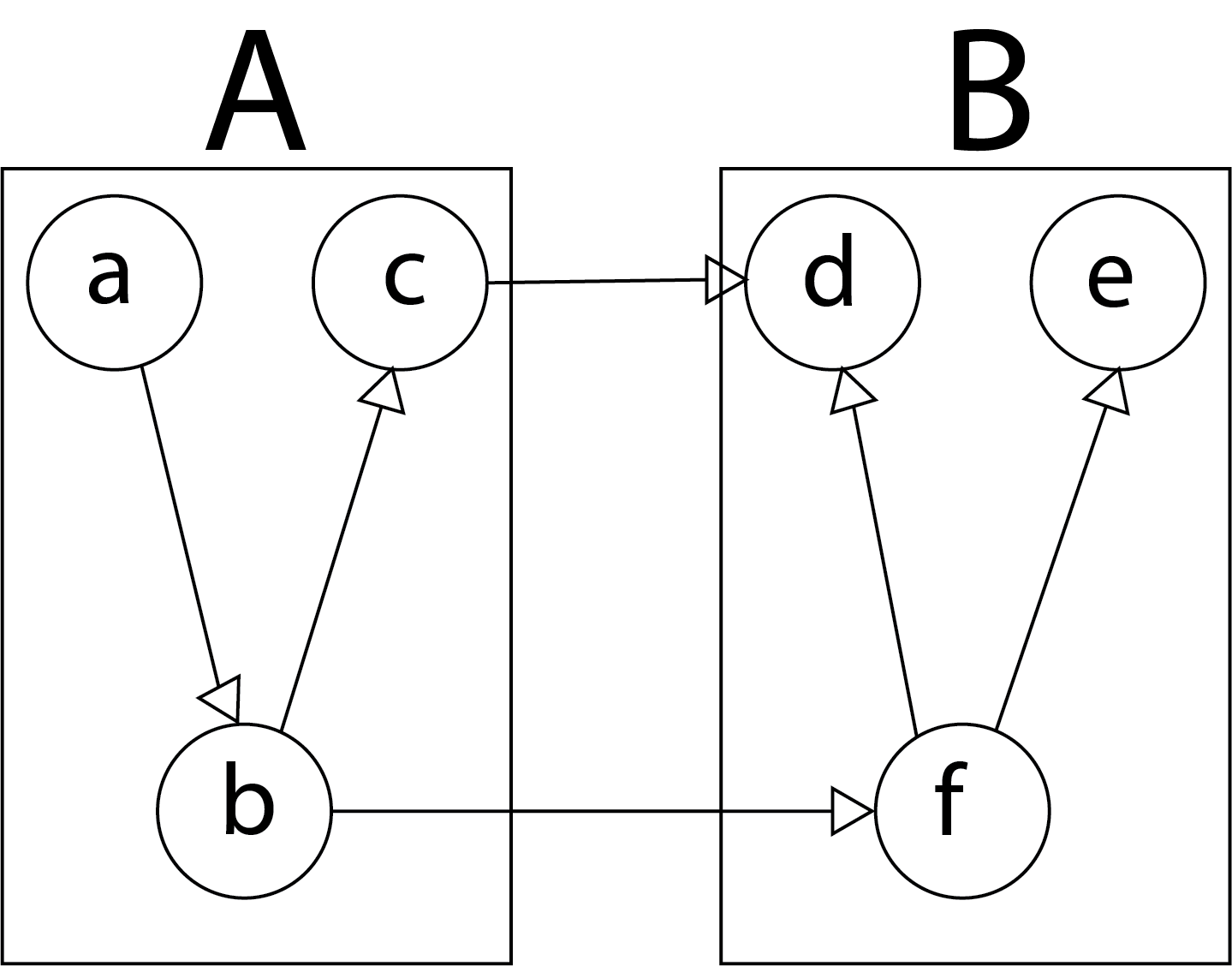} 
\caption{A running example of Modules, Classes, and Dependencies}
\vspace{-0.1in}
\label{fig:example_state}
\end{figure*}
Figure \ref{fig:example_state} shows a simple project with two modules, $A$ and $B$, each containing three classes. The dependencies between these classes are indicated by arrows, with the arrow pointing from the dependent class to the class that it depends on. Clearly, we can represent the project as a graph where each vertex corresponds to a class, each edge corresponds to a dependency, and the classes are partitioned in mutually exclusive sets such that there is one for each module. Formally, we represent a project as a tuple $s = \langle V, E, \mathcal{M} \rangle$ where $V$ is a set of vertices, $E$ is a directed set of edges over $V$, and $\mathcal{M} = \lbrace M_1, ..., M_k \rbrace$ is a set of modules, where each $M_i \subseteq V$ is the set of vertices in module $M_i$, and there is no overlap between modules (\textit{ie.} $M_i \cap M_j = \emptyset$ for every $M_i, M_j \in \mathcal{M}$). Note that it is possible for a state to contain both edge $(x, y)$ and edge $(y, x)$.

It is important to note that dependencies are transitive, so if we have edges $(x, y)$ and $(y, z)$, then $x$ also implicitly depends on $z$. However, this can also make edges superfluous, and thus possible to delete. For example, an edge $(b, d)$ would be unnecessary in Figure \ref{fig:example_state} due to edges $(b,c)$ and $(c, d)$. Thus, we will say that a class $x$ depends on $y$ if edge $(x, y)$ is in the state or there is a directed path of edges from $x$ to $y$ in the state.

Edges can also be partitioned into two types: \emph{intra-module edges} and \emph{inter-module edges}. 
An intra-module edge is between vertices in the same module: $e = (v_1, v_2)$ where $v_1, v_2 \in M_i$. An inter-module edge is an edge between vertices in different modules: $e = (v_1, v_2)$ where $v_1 \in M_i$ and $v_2 \in M_j$ where $M_i \neq M_j$. 
We will also use the following notation for a state $s$: $\intraedge^{M}(s)$ to denote the set of intra-edges between vertices in $M$, $\interedge^{M_1,M_2}(s)$ to denote the set of all inter-edges between $M_1$ and $M_2$, $\intraedge(s)$ to denote the set of intra-edges in all modules, and $\interedge(s)$ to represent the set of all inter-edges.

This notation also allows us to formally measure cohesion and coupling. Specifically, we measure cohesion as the total number of intra-edges in the current project, namely $\intraedge(s)$. Coupling is then measured as the total number of inter-edges, namely $\interedge(s)$.

Our objective is now to improve the cohesion and coupling of a software project by adding or removing edges.
However, the final product of the refactoring is to maintain the original dependencies of the project.
That is, the final outcome may add or remove edges, but the dependencies must be preserved.
\section{Code Refactoring as a Search Problem}
In this section, we will describe our approach to finding refactoring. At a high level, we use best-first search on a simplified version of the problem to find possible solutions to the refactoring problem. The method can be run multiple times with different hyperparameter settings to generate solutions with different cohesion levels. A simple repair method is also used to increase the likelihood that the generated solutions maintain the original dependencies.
As a running example, we will use the simple project in Figure \ref{fig:example_state}.
\subsection{The Best-First Search Action Set} \label{sec:action_set}
As stated in Section \ref{sec:refactoring_definition}, our problem states are given by a tuple which uses a graph representation of the problem along with a partitioning of the vertices into modules.
A simple action set would then be to allow one action for adding each missing edge in the graph and removing each existing edge in the graph. This would correspond to 10 possible add actions and 6 possible delete edges in our running example.

Unfortunately, early tests suggested this greatly slowed down the generation of solutions, so we made several simplifications.
First, we do not allow adding inter-edges since doing so would not change cohesion and would hurt/increase coupling. Similarly, we do not allow for the deletion of intra-edges, since that would hurt/decrease cohesion without changing coupling.
This leaves only 8 add actions and 2 delete actions in the project in our running example.

As this was still leading to a slow search, we further pruned the set of actions applicable so that only a single add action and a single delete action were applicable in each state.
The edge in question was selected arbitrarily. 
We will discuss the consequences of these choices below in Section \ref{sec:repair}. For simplicity, we set the cost of all actions to be 1 in our current system. These values may be updated to account for notions of the desirability of adding/deleting an edge, or the difficulty of performing that refactoring step. However, we leave such extensions as future work.
\subsection{Goal States and Cohesion Aggression}
To capture the objective of improving cohesion and coupling, we use these measures to define the goal states.
In particular, a state $s$ is a goal state if and only if the following two conditions hold:
\begin{enumerate}
    \item There is no more than 1 inter-edge between any two modules. Formally, this means that for any two modules $M_1$ and $M_2$, $\interedge^{M_1,M_2}(s) \leq 1$.
    \item For a given hyperparameter $\alpha$ where $0 \leq \alpha \leq 1$, the cohesion of $s$ must be at least a factor of $\alpha$ of the maximum possible cohesion. Formally, this means that $\intraedge(s) \geq \alpha \sum_{M \in \mathcal{M}} M(M-1)$.  
\end{enumerate}
Note that $\intraedgemax(s) = \sum_{M \in \mathcal{M}} M(M-1)$ 
is used to denote the maximum number of intra-edges possible in $s$.

The hyperparameter $\alpha$ is known as the \emph{cohesion aggression}. Its main purpose is to determine the desired level of cohesion within a module. 
When set to 1, Opti Code Pro establishes a bidirectional connection between all classes within the same module, resulting in the highest degree of cohesion. A value of 0 will not require any additional dependencies be added. The value of $\alpha$ should thus be selected based on the specific goals and constraints of the project, with a default value of 0.5 commonly used. The appropriate tuning of $\alpha$ by Opti Code Pro enables a balance between improving the cohesion of the module and preserving its existing structure and functionality.

Notice that the goal test does not require that the final solution actually satisfy the original dependencies of the problem.
This is because doing so was found to be a very expensive check at each step of the search.
Doing so may also make it difficult to find satisfying solutions when combined with the requirement that there is at most one inter-edge between every pair of modules.
We will address this issue in Section \ref{sec:repair}, but note that the aggression level will also affect the likelihood that the final solution is valid.
For example, when the aggression is equal to 1.0, it is relatively easy to remove inter-edges while maintaining the original dependencies, since the additional intra-edges substantially increase the number of superfluous inter-edges.
\subsection{Completeness, Optimality, and Solution Repair} \label{sec:repair}
As stated above, by simplifying the problem and not checking if the original dependencies of a problem are maintained in a goal state, we are not guaranteed that our search returns valid solutions to our main problem. Similarly, by pruning all but two actions in every state, we are not guaranteed to find the shortest solution, even to the problem of reducing coupling to only one inter-edge per pair of modules and reaching a cohesion level required by the given cohesion aggression parameter.

We do note though that when A* is used with an admissible heuristic for this simplification, we are finding optimal solutions to this simplified problem. As our results show below, we are also generally able to find valid solutions that improve both cohesion and coupling for reasonable cohesion aggression levels. This is especially true when using the solution repair method we use which will be described below.

To see why we still get effective solutions, we first note that even though we are only allowing two actions per state, the fact that action sequencing does not affect the resulting state means that delaying when an action is applicable will not necessarily affect solution quality. For example, notice that if we add edge $(a, c)$ to the example state in Figure \ref{fig:example_state} prior to edge $(d, e)$, the resulting state will be the same if we add $(d, e)$ prior to $(a, c)$.
Thus, if we need to add $(a, c)$ and $(d, e)$ for our solution, the fact that adding $(d, e)$ may not be applicable in the current state will not necessarily affect solution quality if it is applicable later on.

Of course, if the arbitrarily selected set of actions is not actually needed in the best solution, the search may have to apply them in order to reach states where the necessary actions are applicable. This can lead to plans that cost more than what would be found if the full action set were allowed in every state. Having said that, if the cohesion aggression is high enough, it is likely that this effect is minor since most of the added edges will have to be included anyways.

Regarding the validity of the final solutions, we can also use a simple \emph{solution repair} approach to significantly increase the likelihood that we end up with solutions that maintain the original problem's dependencies.
Specifically, we found that the majority of cases where we were returning invalid cases were due to the fact that if there were two-way dependencies between modules, our goal test would force one of them to be deleted. For example, suppose that our example in Figure \ref{fig:example_state} also had an edge $(f, b)$. The best-first search solution would necessarily delete two of the three edges $(c, d)$, $(b, f)$, and $(f, b)$. Suppose that it deleted $(c, d)$ and $(f, b)$. Then, in this case, all dependencies from classes in module A to those in module B would be lost.

To remedy this, we use the following approach. We iteratively add the reverse of every inter-edge remaining in our solution, but only if it improves validity. For example, in the above example, we would add the edge $(f, b)$ since the remaining edge was $(b, f)$. After each edge is added, we check the original dependencies. If the resulting state is now valid, we return that solution. Otherwise, we proceed to the next inter-edge and add its reverse edge. This continues until a valid solution is found or we run out of inter-edges.
As we will see in the experiments, this simple approach will generally result in more valid solutions, especially for higher aggression levels.

In summary, while our method is not guaranteed to find optimal or valid solutions to the class refactoring problem, it is still likely to be finding good solutions that we will show are generally valid. It will also be able to do so very quickly, as we will see in the experiments. Thus, Opti Code Pro can be seen as using an incomplete approach that is effective in practice in a similar vein as other Artificial Intelligence approaches such as the use of Tabu search \cite{glover1998tabu} for combinatorial optimization problems or Las Vegas algorithms \cite{luby1993optimal} for various decision problems.
\subsection{Implemented on 2 Different Java Projects}
The proposed code refactoring technique using best-first search algorithms was tested on three Java projects scraped from GitHub. The projects were extracted using regular expressions to gather information on classes, modules, and dependencies. The implementation aimed to evaluate the technique's effectiveness and usefulness in real-world projects by measuring improvements in cohesion, coupling, and other metrics and comparing the results to other code refactoring methods. The results showed the potential of best-first search algorithms for code refactoring, as we were able to identify low cohesion and high coupling areas and suggest refactoring actions that improved the code's quality and maintainability.
\section{Heuristics for Refactoring}
In this section, we describe the heuristic functions we use in our best-first search approach. We also prove that these heuristics are consistent and admissible. Thus, any solution found with A* when guided by these heuristics will be optimal. We also note that these properties will hold regardless of whether the action set uses the simplified action sets used in Opti Code Pro or the full action set of the main problem as described in Section \ref{sec:action_set}.
\subsection{The Coupling  Heuristic}
Recall that we measure the coupling of a program as the number of inter-module edges (\textit{ie.} the number of dependencies between classes in different modules) and that our goal test also requires that there be no more than one inter-module dependency between any two modules.
Thus, our \emph{coupling heuristic} estimates the minimum number of coupling actions that must be taken to reach such a goal state.
Intuitively, this is given by counting the number of ``extra'' edges between every pair of modules, meaning beyond the single allowed by the goal test.
Formally, this heuristic is given as follows:

\begin{align} \label{coupling_heuristic}
    \hcoupling(s) = \sum_{M_1,M_2 \in \mathcal{M}} \max(0, |\interedge^{M_1,M_2}(s)| - 1) 
\end{align}
For example, the value of the coupling heuristic in our example in Figure \ref{fig:example_state} is 1.

Notice that for each pair of modules, our heuristic takes the maximum with 0. This is because if a pair of modules already have 0 or 1 dependency between them, there is no need to remove a dependency.
This is also important to show that the heuristic is consistent and admissible.
To do so, we first use the following lemma

\begin{lemma}\label{lemma:coupling_edge_change}
    If $c \in succ(p)$ and is achieved with action $a$, the following hold:
\begin{enumerate}
    \item If $a$ adds an intra-edge, then $\hcoupling(p) = \hcoupling(c)$.
    \item If $a$ removes an inter-edge, then $\hcoupling(c) \geq \hcoupling(p) - 1$.
\end{enumerate}
\end{lemma}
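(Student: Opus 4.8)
The plan is to prove both parts by bookkeeping on the edge-set cardinalities $|\interedge^{M_1,M_2}|$ and then tracking how a change of one such cardinality by one propagates through the clipped terms $\max(0, |\interedge^{M_1,M_2}| - 1)$ in the definition of $\hcoupling$. For part~1, I would first observe that since $a$ adds an intra-module edge, that edge lies between two vertices of the same module, so it belongs to $\intraedge(c)$ and to no set $\interedge^{M_1,M_2}$. Hence $|\interedge^{M_1,M_2}(c)| = |\interedge^{M_1,M_2}(p)|$ for every pair of modules $M_1, M_2$, so every summand of $\hcoupling$ is identical in $p$ and $c$, and summing gives $\hcoupling(p) = \hcoupling(c)$.

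For part~2, let $e = (v_1, v_2)$ be the removed inter-edge, with $v_1 \in M_i$, $v_2 \in M_j$, $M_i \neq M_j$. The removal leaves every inter-module edge set unchanged except the one associated with the module pair $\{M_i, M_j\}$, whose cardinality drops by exactly one; write $n$ for that cardinality in $p$, so it is $n-1$ in $c$ and $n \geq 1$. It then suffices to show the corresponding summand of $\hcoupling$ drops by at most one. If $n = 1$, the summand is $\max(0, 0) = 0$ in $p$ and $\max(0, -1) = 0$ in $c$, hence unchanged; if $n \geq 2$, the summand is $n-1$ in $p$ and $n-2$ in $c$, hence decreases by exactly one. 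In either case $\hcoupling(c) \geq \hcoupling(p) - 1$ (and incidentally $\hcoupling(c) \leq \hcoupling(p)$, though only the lower bound is claimed).

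The argument is essentially arithmetic, so I do not expect a serious obstacle. The one point that must be stated carefully is that a single edge removal changes exactly one of the inter-module edge sets — which holds because each inter-edge is charged to the unique pair of modules containing its two endpoints — together with the boundary behaviour of the $\max(0, \cdot)$ clipping, which is precisely what keeps the heuristic from dropping below $\hcoupling(p) - 1$ when the affected module pair already has only a single inter-edge. It is worth noting in passing that degenerate summands with $M_1 = M_2$ contribute $\max(0, -1) = 0$ and play no role throughout, and that this lemma will be combined with the unit action costs to yield consistency of $\hcoupling$.
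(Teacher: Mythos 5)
Your proof is correct and follows essentially the same route as the paper's: part~1 by noting that an intra-edge addition leaves every inter-module edge set untouched, and part~2 by a case split on whether the affected module pair has one or more than one inter-edge in $p$. Your write-up is slightly more careful in isolating the single affected summand and the behaviour of the $\max(0,\cdot)$ clipping, but the argument is the same.
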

\begin{proof}
If $a$ adds an intra-edge, it has no effect on the number of inter-edges. Thus, the value of equation \ref{coupling_heuristic} will be the same before and after the action and so $\hcoupling(p) = \hcoupling(c)$.

Suppose $a$ removes an inter-edge between modules $M_1$ and $M_2$. If $|\interedge^{M_1,M_2}(p)| > 1$, then the heuristic value will decrease by 1 from $p$ to $c$. If $|E^{m_1,m_2}_{intra}(p)| = 1$, then the heuristic value will not change. In either case, $\hcoupling(c) \geq \hcoupling(p) - 1$.
\end{proof}

This lemma allows us to prove the consistency and admissibility of $\hcoupling$:
\begin{theorem} \label{theorem:coupling_admissible}
    $\hcoupling$ is consistent and admissible.
\end{theorem}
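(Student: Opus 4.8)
The plan is to establish consistency directly from the definition, namely that $\hcoupling(c) \geq \hcoupling(p) - k(a,p)$ for every pair $p$, $c$ with $c \in succ(p)$ and $\beta(a,p)=c$, and then invoke the well-known fact cited in the Search Background section: a consistent heuristic that vanishes on goal states is admissible. So there are really two things to check --- the per-step inequality, and the boundary condition $\hcoupling(s)=0$ on goals.

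For the per-step inequality, I would do a case analysis on the action $a$, using the fact (from Section \ref{sec:action_set}) that in the simplified domain the only actions are adding an intra-edge or removing an inter-edge, but more importantly noting that for the consistency argument it suffices to consider the action \emph{types} that can occur in the full action set as well: adding an intra-edge, adding an inter-edge, removing an intra-edge, and removing an inter-edge. \Cref{lemma:coupling_edge_change} already handles the two cases relevant to Opti Code Pro's action set: adding an intra-edge leaves $\hcoupling$ unchanged, and removing an inter-edge decreases it by at most $1$. Since $k(a,p)=1$ for all actions, in the first case $\hcoupling(c) = \hcoupling(p) \geq \hcoupling(p) - 1 = \hcoupling(p) - k(a,p)$, and in the second case the lemma gives exactly $\hcoupling(c) \geq \hcoupling(p) - 1 = \hcoupling(p) - k(a,p)$. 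For completeness with respect to the full action set, I would add the remaining two easy cases: removing an intra-edge does not change the inter-edge counts, so $\hcoupling(c) = \hcoupling(p)$; and adding an inter-edge between $M_1$ and $M_2$ can only increase $|\interedge^{M_1,M_2}|$ by one, hence can only increase the corresponding $\max(0,\cdot-1)$ term by at most one, so $\hcoupling(c) \geq \hcoupling(p) \geq \hcoupling(p) - k(a,p)$. In every case the consistency inequality holds.

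For the boundary condition, I would observe that if $s$ is a goal state then condition (1) of the goal test gives $|\interedge^{M_1,M_2}(s)| \leq 1$ for every pair of modules $M_1, M_2$, so each summand $\max(0, |\interedge^{M_1,M_2}(s)| - 1)$ is $0$, whence $\hcoupling(s) = 0$. Combining this with consistency and the cited fact that consistency plus $h(s)=0$ on goals implies admissibility yields the theorem.

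I do not expect a serious obstacle here; the content is essentially packaging \Cref{lemma:coupling_edge_change} together with the two additional action types and the trivial goal-state evaluation. The one point requiring a little care is making sure the case analysis is exhaustive relative to whichever action set is in force --- the paper claims the heuristic properties hold for both the simplified and the full action sets, so the proof should either explicitly enumerate all four action types (as above) or remark that the simplified set is a subset of the full set and it suffices to verify the inequality for each action type in the full set. I would take the explicit enumeration route for clarity.
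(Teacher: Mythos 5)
Your proposal is correct and follows essentially the same route as the paper: consistency via \Cref{lemma:coupling_edge_change} together with unit action costs, then $\hcoupling(s)=0$ on goal states, then the cited fact that a consistent heuristic vanishing on goals is admissible. The only difference is that you explicitly verify the two extra action types (adding an inter-edge, removing an intra-edge) needed to back the paper's claim that the result also holds for the full action set, which the paper's own proof leaves implicit; this is a welcome bit of extra care but not a different argument.
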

\begin{proof}
    Since all actions have a cost of 1, to prove $\hcoupling$ we must show that $\hcoupling(c) \geq \hcoupling(p) - 1$ for any pair of states $p$ and $c$ where $c$ is a successor of state $p$. This follows immediately from Lemma \ref{lemma:coupling_edge_change}.

    If $s$ is a goal state, then there is at most 1 edge between any pair of modules by definition. When this holds, clearly $\hcoupling(s) = 0$ by the definition in Equation \ref{coupling_heuristic}. Therefore, $\hcoupling$ is admissible since it is consistent and $\hcoupling(s) = 0$ for any goal state $s$.
\end{proof}
\subsection{Cohesion Heuristic}
For our second heuristic, recall that our goal test checks if the total number of intra-edges is at least $\alpha \intraedgemax$. 
Thus, our \emph{cohesion heuristic} measures how many intra-module edges must be added before satisfying this goal test the current state:
\begin{align}
    \hcohesion(s) = \max(0, \lceil \alpha \intraedgemax \rceil - \sum_{M \in \mathcal{M}} |\intraedge^M(s)|
\end{align}

Intuitively, this heuristic will encourage the search to add intra-edges until the $\alpha E^{max}_{intra}$ threshold is reached.
Taking the maximum with 0 ensures that once this threshold is reached, the heuristic remains 0.

To show this heuristic is consistent and admissible, we use the following lemma:
\begin{lemma}\label{lemma:cohesion_edge_change}
    If $c \in succ(p)$ and is achieved with action $a$, the following hold:
\begin{enumerate}
    \item If $a$ adds an intra-edge, then $\hcohesion(c) \geq \hcohesion(p) - 1$.
    \item If $a$ removes an inter-edge, then $\hcohesion(p) = \hcohesion(c)$.
\end{enumerate}
\end{lemma}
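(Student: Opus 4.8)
The plan is to mirror the structure of the proof of Lemma~\ref{lemma:coupling_edge_change}, exploiting the fact that $\hcohesion$ depends on a state $s$ only through the two quantities $\intraedgemax(s)$ and the total intra-edge count $\sum_{M \in \mathcal{M}} |\intraedge^M(s)|$. The first observation I would record is that $\intraedgemax(s) = \sum_{M \in \mathcal{M}} M(M-1)$ is a function of the module sizes alone, and that neither adding an edge nor removing an edge ever changes which vertex belongs to which module. Hence in both cases of the lemma, $\intraedgemax(c) = \intraedgemax(p)$, so $\lceil \alpha \intraedgemax \rceil$ is the same constant whether the heuristic is evaluated at $p$ or at $c$; only the total intra-edge count can differ between $p$ and $c$.

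For part~1, I would note that adding an intra-edge increases $\sum_{M \in \mathcal{M}} |\intraedge^M(s)|$ by exactly one, since the new edge lies inside a single module and is therefore counted exactly once, while $\lceil \alpha \intraedgemax \rceil$ stays fixed by the observation above. Writing $x$ for the difference $\lceil \alpha \intraedgemax(p) \rceil - \sum_{M \in \mathcal{M}} |\intraedge^M(p)|$, we then have $\hcohesion(p) = \max(0, x)$ and $\hcohesion(c) = \max(0, x-1)$, and the desired inequality $\hcohesion(c) \geq \hcohesion(p) - 1$ reduces to the elementary fact that $\max(0, x-1) \geq \max(0, x) - 1$ for every real number $x$. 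If one prefers an explicit argument, a two-case split on whether $x \leq 0$ or $x > 0$ makes this transparent, exactly paralleling how Lemma~\ref{lemma:coupling_edge_change} splits on the size of $\interedge^{M_1,M_2}(p)$.

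For part~2, I would observe that removing an inter-edge changes neither $\intraedgemax$ (module membership is untouched) nor any $|\intraedge^M(s)|$ (the deleted edge is an inter-edge, so it belongs to no intra-edge set). Both arguments of the outer $\max$ in the definition of $\hcohesion$ are therefore identical at $p$ and at $c$, which gives $\hcohesion(p) = \hcohesion(c)$ at once.

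I do not expect a genuine obstacle here: the argument is essentially bookkeeping about which action type touches module membership and which edge sets. The only point warranting a moment's care is the interaction with the outer $\max(0,\cdot)$ in part~1, which is why I would state the inequality $\max(0, x-1) \geq \max(0, x) - 1$ explicitly rather than leaving it implicit; everything else follows directly from the invariance of $\intraedgemax$ under edge additions and deletions.
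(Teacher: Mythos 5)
Your proof is correct and follows essentially the same route as the paper's: both arguments reduce to observing that adding an intra-edge increments the intra-edge count by one (handled there by a two-case split on whether $\hcohesion(p)=0$, and by you via $\max(0,x-1)\geq\max(0,x)-1$, which are the same computation), while deleting an inter-edge leaves the intra-edge count untouched. Your explicit remark that $\intraedgemax$ is invariant because module membership never changes is a small point the paper leaves implicit, but it does not change the argument.
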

\begin{proof}
    If $a$ adds an intra-edge, there are two cases to consider. If $\hcohesion(s) = 0$, adding an intra-edge will not affect the heuristic further, so $\hcohesion(p) = \hcohesion(c)$. Otherwise, the total number of intra-edges will get one closer to the $\alpha \intraedgemax$ threshold and so $\hcohesion(c) = \hcohesion(p) - 1$. Thus, $\hcohesion(c) \geq \hcohesion(p) - 1$.

    If $a$ deletes an inter-edge, then the number of intra-edges stays constant from $p$ to $c$. As such, $\hcohesion(p) = \hcohesion(c)$.
\end{proof}

This lemma now immediately implies that $\hcohesion$ is consistent and admissible:
\begin{theorem} \label{theorem:cohesion_admissible}
    $\hcohesion$ is consistent and admissible.
\end{theorem}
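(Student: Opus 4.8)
The plan is to mirror the proof of \Cref{theorem:coupling_admissible}, establishing consistency first and then deriving admissibility from it. For consistency, recall that a heuristic $h$ is consistent if $h(c) \geq h(p) - k(a,p)$ whenever $c \in succ(p)$ via action $a$; since every action costs $1$ in our setting, it is enough to show $\hcohesion(c) \geq \hcohesion(p) - 1$ for every state $p$ and every successor $c$. For the simplified action set this is immediate from \Cref{lemma:cohesion_edge_change}: the only two action types are adding an intra-edge, which decreases $\hcohesion$ by at most $1$, and deleting an inter-edge, which leaves $\hcohesion$ unchanged; in both cases the required inequality holds. To cover the full action set, I would add the two remaining cases, each of which is equally trivial: adding an inter-edge does not change $\sum_{M}|\intraedge^M(s)|$, so $\hcohesion(c) = \hcohesion(p)$; and removing an intra-edge can only move the current intra-edge count further from (or keep it at the same distance from) the $\lceil \alpha \intraedgemax \rceil$ threshold, so $\hcohesion(c) \geq \hcohesion(p) \geq \hcohesion(p) - 1$. (One could alternatively fold these two cases into an extended statement of \Cref{lemma:cohesion_edge_change}.)

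For admissibility, I would invoke the standard fact recalled in the Search Background section: a consistent heuristic that is $0$ on every goal state is admissible. Thus it only remains to verify $\hcohesion(s) = 0$ whenever $s$ is a goal state. By the second goal condition, $\sum_{M \in \mathcal{M}} |\intraedge^M(s)| = |\intraedge(s)| \geq \alpha \intraedgemax$. The left-hand side is an integer, so this real inequality upgrades to $\sum_{M \in \mathcal{M}} |\intraedge^M(s)| \geq \lceil \alpha \intraedgemax \rceil$, hence $\lceil \alpha \intraedgemax \rceil - \sum_{M \in \mathcal{M}} |\intraedge^M(s)| \leq 0$, and therefore $\hcohesion(s) = \max(0, \cdot) = 0$ by definition of $\hcohesion$. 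Combining this with the consistency argument above completes the proof.

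I do not expect any serious obstacle here: the argument is routine once \Cref{lemma:cohesion_edge_change} is available, and it is structurally identical to the coupling case. The one point that deserves a line of care is the ceiling step in the admissibility part — the observation that an integer bounded below by a real number $r$ is bounded below by $\lceil r \rceil$ — since this is precisely why the $\lceil \alpha \intraedgemax \rceil$ appearing in $\hcohesion$ is the right quantity to make the heuristic vanish exactly on states satisfying the goal test. A secondary point worth making explicit, as the theorem claims it, is that consistency holds for the full action set and not merely the pruned two-action set, which is why I would spell out the two extra cases above rather than cite \Cref{lemma:cohesion_edge_change} verbatim.
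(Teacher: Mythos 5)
Your proposal is correct and follows essentially the same route as the paper's proof: consistency via \Cref{lemma:cohesion_edge_change} (with all actions costing $1$), then admissibility from consistency plus $\hcohesion(s)=0$ at goal states. You are in fact more careful than the paper, which only gestures at the analogy with \Cref{theorem:coupling_admissible} and does not spell out the ceiling argument or the two extra action types needed to cover the full action set, but the underlying argument is identical.
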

\begin{proof}
    This follows an analogous argument as Theorem \ref{coupling_heuristic}. The consistency of $\hcohesion$ follows from Lemma \ref{lemma:cohesion_edge_change}, which implies admissibility since $\hcohesion(s) =  0$ for any goal state $s$.
\end{proof}
\subsection{The Additive Heuristic}
Since each of $\hcoupling$ and $\hcohesion$ focuses on a different part condition of the goal test, a natural extension is to take their sum as an estimate of the effort needed to simultaneously satisfy both conditions.
Below, we show that this \emph{additive} heuristic, $\hadd = \hcoupling + \hcohesion$, is also consistent and admissible.
Intuitively, this holds because each counts a mutually disjoint set of actions that need to be performed to achieve the goal.
We formalize this below:
\begin{theorem}
    The heuristic $\hadd = \hcoupling + \hcohesion$ is consistent and admissible.
\end{theorem}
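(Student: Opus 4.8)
The plan is to establish consistency of $\hadd$ first and then derive admissibility from it, mirroring exactly how the two component heuristics were handled. Since every action has cost $1$, consistency of $\hadd$ reduces to showing $\hadd(c) \geq \hadd(p) - 1$ whenever $c \in succ(p)$ is reached from $p$ via an action $a$. I would expand $\hadd(c) = \hcoupling(c) + \hcohesion(c)$ and $\hadd(p) = \hcoupling(p) + \hcohesion(p)$ and then case-split on the type of $a$, invoking Lemma \ref{lemma:coupling_edge_change} and Lemma \ref{lemma:cohesion_edge_change} in each case.

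If $a$ adds an intra-edge, Lemma \ref{lemma:coupling_edge_change}(1) gives $\hcoupling(c) = \hcoupling(p)$ while Lemma \ref{lemma:cohesion_edge_change}(1) gives $\hcohesion(c) \geq \hcohesion(p) - 1$; summing the two relations yields $\hadd(c) \geq \hadd(p) - 1$. Symmetrically, if $a$ removes an inter-edge, Lemma \ref{lemma:coupling_edge_change}(2) gives $\hcoupling(c) \geq \hcoupling(p) - 1$ and Lemma \ref{lemma:cohesion_edge_change}(2) gives $\hcohesion(c) = \hcohesion(p)$, again giving $\hadd(c) \geq \hadd(p) - 1$. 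To also cover the full action set of Section \ref{sec:action_set}, I would add a short remark that the two remaining action types are harmless: adding an inter-edge leaves the intra-edge count (hence $\hcohesion$) untouched and can only increase $\hcoupling$, while removing an intra-edge leaves the inter-edge count (hence $\hcoupling$) untouched and can only increase $\hcohesion$, so $\hadd$ never decreases under either.

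Having established consistency, admissibility is immediate from the standard fact recalled in the Search Background: a consistent heuristic that is $0$ on every goal state is admissible. A goal state $s$ has $\hcoupling(s) = 0$ (as argued in Theorem \ref{theorem:coupling_admissible}) and $\hcohesion(s) = 0$ (as argued in Theorem \ref{theorem:cohesion_admissible}), so $\hadd(s) = \hcoupling(s) + \hcohesion(s) = 0$, completing the proof.

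The only place where care is genuinely needed — and the formal content behind the ``mutually disjoint set of actions'' intuition stated before the theorem — is the observation that no single action can decrease $\hcoupling$ and $\hcohesion$ simultaneously: an intra-edge addition cannot alter the inter-edge count, and an inter-edge removal cannot alter the intra-edge count. This is precisely what bounds the per-step drop of the sum by $1$ rather than $2$, and without it the naive concern would be that $\hadd$ overestimates. Everything past that point is routine inequality bookkeeping layered on the two lemmas.
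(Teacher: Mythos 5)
Your proposal is correct and follows essentially the same argument as the paper: a case split on the action type, applying Lemma \ref{lemma:coupling_edge_change} and Lemma \ref{lemma:cohesion_edge_change} to bound the per-step decrease of the sum by $1$, and then concluding admissibility from consistency together with $\hadd(s) = 0$ at goal states. Your additional remark covering the full action set (inter-edge additions and intra-edge deletions) is a small but welcome extension, since the paper asserts the result holds for the full action set without spelling out those two cases.
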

\begin{proof}
    Consider any two pairs of nodes $p$ and $c$, where $c \in succ(p)$, achieved with action $a$. If $a$ removes an inter-edge, then $\hcoupling(c) \geq \hcoupling(p) - 1$ by Lemma \ref{lemma:coupling_edge_change} and Lemma \ref{lemma:cohesion_edge_change} guarantees that $\hcohesion(p) = \hcohesion(c)$. Together, these imply that $\hcoupling(c) + \hcohesion(c) \geq \hcoupling(p) + \hcohesion(p) - 1$.

    The case where $a$ adds an intra-edge is analogous. Lemma \ref{lemma:coupling_edge_change} ensures that $\hcoupling(p) = \hcoupling(c)$ and Lemma \ref{lemma:cohesion_edge_change} ensures that $\hcohesion(c) \geq \hcohesion(p) - 1$. Together, these imply that $\hcoupling(c) + \hcohesion(c) \geq \hcoupling(p) + \hcohesion(p) - 1$.

    Since $\hadd(c) \geq \hadd(p) - 1$ holds in both cases, $\hadd$ is consistent. Since Theorems \ref{theorem:coupling_admissible} and \ref{theorem:cohesion_admissible} show that $\hcoupling(s_g) = \hcohesion(s_g) = 0$ for any goal state $s_g$, clearly $\hadd(s_g) = \hcoupling(s_g) + \hcohesion(s_g) = 0$.
    Together with the fact that $\hadd$ is consistent, this proves it is admissible.
\end{proof}

Notice that for any state $s$, $\hadd(s) \geq \hcohesion(s)$ and $\hadd(s) \geq \hcoupling(s)$.
Since, $\hadd$ therefore \emph{dominates} the other heuristics and all three heuristics are consistent, we are guaranteed that $\hadd$ will expand no more nodes than $\hcohesion$ and $\hcoupling$ (aside from the effects of tie-breaking) \cite{dechter:generalized}.
Our empirical results are consistent with this fact.
The runtime of $\hadd$ will also be improved, demonstrating that the improved accuracy is more than compensating for the extra overhead of calculating two heuristics instead of one.
\section{Experiments}
In this section, we empirically test our approach to class-level refactoring problems. We begin by describing our experimental setup and our use of randomly generated project structures for testing. Using these problems, we will then perform a comprehensive evaluation comparing the relative performance of the different heuristics and the value of using repair for different levels of aggressiveness.
We will then test different suboptimal heuristic search algorithms on this problem to see if they effectively trade-off solution quality for speed. Finally, we show experiments where we applied Opti Code Pro on two real-world Java projects. 
\subsection{Experimental Setup}
For testing our approach, we generated 100 randomly directed graphs with vertices and partitioned the vertices into modules.
These are intended to represent project architectures.
All problems have 25 classes partitioned randomly between 15 modules. 
To generate dependencies, we uniformly sample a number $x$ from $1$ to $25\cdot 24$ (\textit{ie.} the maximum number of edges), and select a random subset of $x$ possible edges. These are then added to the graph.
We then test different combinations of algorithm, heuristic, and aggression on Opti Code.
These experiments are all run on a Mac Studio 2022 with an M1 Max chip with a RAM of 32GB and Mac OS Ventura 13.1.
\subsection{Evaluating Heuristics and Repair Using A*}
In this section, we focus on evaluating our system when using A* to better understand the impact of the heuristics and our repair approach.
To do so, recall that since our heuristics are admissible, we are guaranteed to find optimal solutions to the restricted problem where we do not test for validity in the goal test and only have 2 actions applicable per state, but this is not necessarily optimal or complete to the full refactoring problem.
\begin{table}
\centering

\resizebox{1\textwidth}{!}{%
\begin{tabular}{|l|rrr|rrr|rrr|rrr|}
\toprule
& \multicolumn{3}{c|}{Aggression of 0.25} & \multicolumn{3}{c|}{Aggression of 0.5} & \multicolumn{3}{c|}{Aggression of 0.75} & \multicolumn{3}{c|}{Aggression of 1.0} \\
     Heuristic &        Exp. &     Time (s) &  Cost &       Exp. &     Time (s) &  Cost &        Exp. &     Time (s) &  Cost &     Exp. &     Time (s) &  Cost \\
\midrule
Zero &           44485 &  2.194 &  97.61 &          45511 &  2.208 &  98.74 &           47003 &  2.263 &  101.17 &        47793 &  2.266 &  104.47 \\
Coupling &            6069 &  0.203 &  97.61 &           9439 &  0.256 &  98.74 &           20859 &  0.669 &  101.17 &        43290 &  2.417 &  104.47 \\
Cohesion &           44475 &  2.270 &  97.61 &          45473 &  2.285 &  98.74 &           46812 &  2.424 &  101.17 &        47412 &  2.587 &  104.47 \\
Additive &            5811 &  0.210 &  97.61 &           5905 &  0.214 &  98.74 &            6763 &  0.281 &  101.17 &         8600 &  0.598 &  104.47 \\
\bottomrule
\end{tabular}
 }
 \caption{Performance of A* with Different Heuristics Under Varying Levels of Aggression}
\label{table:atarperformance}
\end{table}

\begin{figure}[t!]
\includegraphics[width=0.8\linewidth]{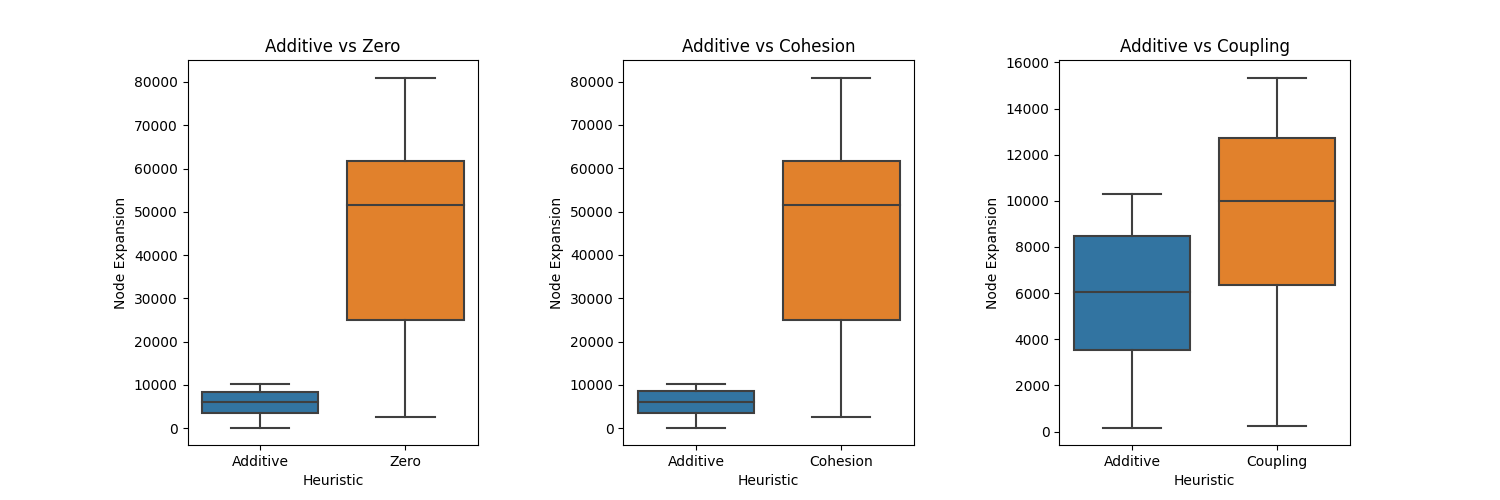}
\vspace{-0.2in}
\caption{Comparing the distribution of node expansions for different heuristics.}
\vspace{-0.15in}
\label{figure2:node_expansions_additive}
\end{figure}

The results for this comparison are shown in Table \ref{table:atarperformance}.
The table shows the average number of expansions needed to find a solution (labeled \emph{Exp.}), the average time in seconds, and the average cost of the solutions found.
In addition to our three proposed heuristics, we also include data for the \emph{zero heuristic}, which always returns a heuristic estimate of 0 for every state.
This is intended as a baseline to help demonstrate the value of the constructed heuristics.

In all cases, we see that using our approximation of the refactoring problem, we are able to find solutions very quickly.
We note that for the lowest aggression value, $0.25$, cohesion provides minimal improvement over the zero heuristic, and the additive heuristic provides minimal improvement over coupling. 
This is because in many cases, the initial state is already satisfying the aggression criteria in our goal test or requires minimal intra-edge add actions to get there.
Thus, the guidance provided by the coupling heuristic becomes more critical.
For high aggression values, the guidance provided by cohesion does become more valuable. 

As stated above, $\hadd$ will dominate the other heuristics so it is unsurprising that it outperforms them in terms of node expansions in all cases.
This benefit is minimal over the coupling heuristic for lower aggression values, and so the added accuracy does not always compensate for the added time to also compute the cohesion heuristic as needed by $\hadd$.
However, the results are still very close for low aggression values and the benefit becomes significant for high aggression values, which motivates the use of $\hadd$ in all cases.
Figure \ref{figure2:node_expansions_additive} provides another view of this behavior in the form of a box-and-whisker plot showing the distribution of node expansions when using the different heuristics when the aggression is 1.0.

Next, we evaluate the validity of solutions found by our approach, and consider the value of the solution repair method.
Table \ref{table:astar_validity} shows a comparison of the validity of the solutions found by A* before and after the repair method.
The table shows the percentage of problems on which valid solutions have been found, and the average decrease in the overall number of inter-edges in the solution found by A* on the 4 aggression levels (0.25, 0.5, 0.75, and 1).
First, we notice that increasing the aggression makes it significantly easier for our approach to finding more valid solutions.
As a result, repairing the solution is critical for low aggression values in which a high percentage of the solutions found do not satisfy the original dependencies.
Repair does fix most, but not all of these issues for low aggression values, and is able to find satisfying solutions to all problems with a high aggression level. 
Also, note that as the aggression values increase, the number of inter-edges added back by the repair method (\textit{ie.} the difference between edges deleted before and after repair) decreases. Hence, the repair method has to add back fewer edges as the aggression values increase.
More importantly though, in general, even after the repair, the average number of inter-edges decreased compared to the original solution does not change much, meaning that we are still effectively reducing coupling compared to the original solution.

\begin{table}
\centering
\small
\begin{tabular}{|c|cc|cc|}
\toprule
     & \multicolumn{2}{c|}{Before Repair} & \multicolumn{2}{c|}{After Repair}\\
     Aggression & Validity & Edges Deleted & Validity &  Edges Deleted\\
\midrule
0.25 & 0.49& 96.28& 0.90 & 94.74\\
0.50 & 0.52& 96.28& 0.94 & 95.28\\
0.75 & 0.60& 96.28& 0.97 & 95.70\\
1.00 & 0.79& 96.28& 1.00 & 95.96\\
\bottomrule
\end{tabular}

\caption{The impact of repair on solution validity and inter-edge deletions.}
\label{table:astar_validity}
\end{table}


\subsection{Using Weighted A* to Find Refactorings}
\begin{table}
\centering
\resizebox{1\textwidth}{!}{%
\begin{tabular}{|l|rrrr|rrrr|rrrr|rrrr|}
\toprule
& \multicolumn{4}{c|}{Aggression of 0.25} & \multicolumn{4}{c|}{Aggression of 0.5} & \multicolumn{4}{c|}{Aggression of 0.75} & \multicolumn{4}{c|}{Aggression of 1.0} \\
     &        Exp. &     Time (s) &  Cost &  Valid &       Exp. &     Time (s) &  Cost &  Valid &        Exp. &     Time (s) &  Cost &  Valid &     Exp. &     Time (s) &  Cost &  Valid \\
   %
\midrule
A* &            5811 &  0.210 &   97.61 &  0.95 &           5905 &  0.214 &   98.74 &  0.96 &            6763 &  0.281 &  101.17 &  0.98 &         8600 &  0.598 &  104.47 &   1.0 \\
WA*(5) &            5572 &  0.201 &   97.67 &  0.95 &           5632 &  0.205 &   98.80 &  0.96 &            5767 &  0.240 &  101.23 &  0.98 &         5891 &  0.410 &  104.53 &   1.0 \\
WA*(10) &            5572 &  0.202 &   97.67 &  0.95 &           5632 &  0.207 &   98.80 &  0.96 &            5767 &  0.239 &  101.23 &  0.98 &         5891 &  0.405 &  104.53 &   1.0 \\
WA*(25) &            5572 &  0.200 &   97.67 &  0.95 &           5632 &  0.206 &   98.80 &  0.96 &            5767 &  0.238 &  101.23 &  0.98 &         5891 &  0.405 &  104.53 &   1.0 \\
WA*(50) &            5572 &  0.201 &   97.67 &  0.95 &           5632 &  0.205 &   98.80 &  0.96 &            5767 &  0.239 &  101.23 &  0.98 &         5891 &  0.403 &  104.53 &   1.0 \\
WA*(100) &            5572 &  0.200 &   97.67 &  0.95 &           5632 &  0.209 &   98.80 &  0.96 &            5767 &  0.235 &  101.23 &  0.98 &         5891 &  0.409 &  104.53 &   1.0 \\
\bottomrule
\end{tabular}
 }

 
 \caption{The performance of WA* as part of Opti Code Pro.}
\label{table:suboptimal_algorithms}
\end{table}


In this section, we test Opti Code Pro when using WA* instead of A* to see if the use of these algorithms further speeds up the time needed to find solutions. Since \ref{sec:repair} shows additive heuristic and repairing to give better results, we use them in the following experiments.

Table \ref{table:suboptimal_algorithms} shows the average number of expansions (labeled Exp.), the average time in seconds, the average cost of the solutions found, and the validity of the suboptimal algorithms across 4 aggression levels.

First, notice that for all weights (5, 10, 25, 50, and 100), WA* gives the same results. 
Also, the average cost of the solutions found is approximately the same for A* and WA*. 



We also note that compared to A*, WA* reduces the number of node expansions and time for all aggression values. Specifically, as the aggression increases, the improvement in the number of node expansions using WA* increases. This is appropriate as with lower aggression, the cohesion heuristic would get satisfied early in the algorithm run, and the heuristic would be based only on coupling. However, as the aggression increases, the additive heuristic will involve the cohesion heuristic more and the increased strength of the heuristic reduces the number of nodes expanded by WA*.

Consequently, similar to node expansions, the time taken by WA* also reduces more as the aggression increases.
However, the fact that all weights yield the same results suggests that performance is plateauing quickly.


 


\subsection{Tests on Open Source Projects}
In this section, we evaluate Opti Code Pro on two open-source projects.
The projects were selected to represent different sized real-world projects.
Details on the projects can be found in Table \ref{tab:projectsInfo}.

\begin{table}[htbp]
    \centering
    \small
    \begin{tabular}{|c|c|c|c|}
    \toprule
         Project & GitHub Link & Num of Modules & Num of Classes \\
    \midrule
         1 & \url{https://github.com/iluwatar/java-design-patterns} & 76 & 173\\
         2 & \url{https://github.com/iluwatar/uml-reverse-mapper} & 6 & 12 \\
    \bottomrule
    \end{tabular}
    \caption{Statistics on the open-source projects used for testing.}
    \label{tab:projectsInfo}
\end{table}

To apply our approach to these projects, we developed a tool to scrape all publicly available open-source Java projects of a given GitHub user.
The classes and modules of the project are then extracted and outputted to a JSON file.
Finally, the JSON file is then given as input to Opti Code Pro, which converts it to an initial state for our heuristic search-based engine.

The performance of our system on two projects when using different aggression values is shown in Table \ref{tab:open_source}.
For these experiments, we used the additive heuristic and the solution repair approach.
The columns show the number of inter-edge deletions, the number of intra-edge additions, the node expansions, runtime, and whether the solution found is valid (1) or not (0) after repair.

The table shows that the number of deletions (inter-edges deleted) is the same for the different aggression values, but the number of additions (intra-edges added) increases with higher aggression values.
Again, our methods are performing better for higher aggression values.
In particular, we note that we had to use a much higher cohesion to reach valid solutions for the larger project. 
This suggests that it is necessary to develop ways to better capture missing dependencies during the search if the tool is to be used for lower aggression values.
However, the tool is clearly effective at finding valid solutions with high cohesion, and it does so in a reasonable amount of time even for very large software projects.
\begin{table}
\centering
\resizebox{1\textwidth}{!}{%
\begin{tabular}{|cc|rrrrr|rrrrr|rrrrr|}
\toprule
       &       & \multicolumn{5}{c|}{Aggression of 0.85} & \multicolumn{5}{c|}{Aggression of 0.95} & \multicolumn{5}{c|}{Aggression of 1.0} \\
      Proj. &   Alg.    & Dels & Adds & Exp. &    Time (s) & Valid. & Dels & Adds & Exp. &    Time (s) & Valid. &  Dels & Adds & Exp. &    Time (s) & Valid. \\
\midrule
1 & A* &              23 &   347 &       370 &  7.1817 &     0 &              23 &   396 &       789 &   9.8946 &     1 &           23 &   420 &      1232 &  11.4687 &     1 \\
2       & A* &               0 &     7 &      7 &  0.0011 &     1 &               0 &     9 &      9 &   0.0006 &     1 &            0 &     9 &      9 &   0.0006 &     1 \\
\bottomrule
\end{tabular}
}
\caption{Performance of Different Java Projects Under Varying Levels of Aggression}
\vspace{- 0.6 cm}
\label{tab:open_source}
\end{table}
\vspace{- 0.2 cm}
\section{Related Work}
\vspace{- 0.1 cm}
Heuristics and Search-based solutions have been briefly mentioned, in theory, to solve a variety of problems like requirement prioritization, finding a good design, test data collection, and re-factoring \cite{searchbasedsegeneral}.

One approach to refactoring is to pre-define rules, also known as ``code smells", which are indicative of potential problems in code. For instance, Fowler \cite{fowler2018refactoring} describes a set of code smells and provides a list of refactoring that can address each code smell. However, this approach is dependent on a predetermined set of rules and is not applicable to all situations.

To address this limitation, Ouni et al. \cite{ouni2016multi} proposed a multi-objective search approach to detecting refactoring. Multiple objectives, such as reduction of design errors and design maintenance of design semantics, and using a search algorithm to determine the best sequence of refactoring to enhance code quality while minimizing the consequence of existing code, allowing for more contextually relevant refactoring recommendations. Ouni et al. \cite{ouni2016multi} tackles low-level code refactoring like moving methods, fields, and classes. However, this paper tackles dependency refactoring to specifically improve cohesion and coupling, so work done by Ouni et al. \cite{ouni2016multi} is incomparable.

Greedy and A*-based algorithms have also been utilized to identify refactoring solutions. For instance, Wongpiang and Muenchaisri \cite{wongpiang2014comparing} compare the efficacy of different best-first search algorithms for identifying refactoring that maximizes software maintainability. However, Wongpiang and Muenchaisri \cite{wongpiang2014comparing} only discuss fixing internal changes to fix Long Method, Large Class, and Feature Envy bad smells. 

Hayash et al. \cite{searchbasedcoderefactoring} proposed a search-based approach to detecting concurrent refactoring. This approach considers changes to the structure of the program as transitioning between states and utilizes a search algorithm to discern the sequence of refactoring that was performed. Hayash et al. \cite{searchbasedcoderefactoring} only considers fine-grain refactoring like extracting methods, moving methods, and removing parameters from methods.

Code refactoring consists of many code modifications, complicating the matter. Instead of solving all code refactoring techniques more generally, we simplify this by approximating the problem and focusing on the problem of cohesion and coupling.

\section{Conclusion and Future Work}
In this work, we formulate a class-level refactoring problem as a search problem, then use heuristic search algorithms applied to an approximation of the problem to generate refactorings with improved coupling and cohesion. We also developed several heuristics for use by these algorithms. By guiding the refactoring process towards solutions that have high cohesion and low coupling, the proposed method can improve the design, structure, and implementation of a program without changing its functionality. The experiments on randomly generated problems, and the implementation on open-source Java projects demonstrate the effectiveness of this approach, and show that it is a promising solution for addressing the common problem of coupling and cohesion in software development.

In future work, coupling aggression could be added as a hyperparameter to lower the aggressiveness on the number of inter-edge required by the goal test function. Ideally, we would also be able to directly find solutions for the full refactoring problem, and this will require the development of safe action pruning techniques. The problem could also be made more realistic by considering different cost functions that capture the difficulty of different refactoring actions. Better feasibility evaluation approaches could also be incorporated. This could include systems like Ref-Finder \cite{RefFinder}, a software used to evaluate changes between 2 versions (before and after suggested refactoring from the algorithm), or other architecture evaluation techniques \cite{ouni2016multi}.

This method also represents a first step towards more general automated refactoring tools. This could include proposing a refactoring technique to use, such as suggesting where to split a large module to create smaller inter-connected modules that each have higher cohesion. Possibly, an AI system could even implement the suggestions. 

\printbibliography[heading=subbibintoc]

\end{document}